\newtheorem{theorem}{Theorem}
\newtheorem{corollary}{Corollary}
\begin{document}
\title{Global asymptotic properties for a Leslie-Gower food chain model}
\thanks{First published in Mathematical Biosciences and Engineering \textbf{6}:585--590 (2009).}
\author{Andrei Korobeinikov}
\email{andrei.korobeinikov@ul.ie}
\affiliation{MACSI, Department of Mathematics and Statistics, University of
  Limerick, Limerick, Ireland}

\author{William T. Lee}
\email{william.lee@ul.ie}
\affiliation{MACSI, Department of Mathematics and Statistics, University of
  Limerick, Limerick, Ireland}

\begin{abstract}
We study global asymptotic properties of a continuous time
Leslie-Gower food chain model.  We construct a Lyapunov function which
enables us to establish global asymptotic stability of the unique
coexisting equilibrium state.
\end{abstract}

 \keywords{Leslie-Gower model, Lyapunov function, global stability}



\maketitle




\bigskip

In his papers~\cite{Les_48,Les._58}, P.H. Leslie introduced a
predator-prey model where both interacting species are assumed to grow
according to the logistic law.  That is both species grow with a
rate that is initially (for small population) proportional to the
population and is limited by a carrying capacity.  The novel feature
of this model is that, while the carrying capacity for the prey is a
positive constant, the carrying capacity of the predator's
environment is proportional to the prey population.  This idea leads
to a model that is quite different from the Lotka-Volterra
predator-prey model.  Leslie's model stresses the fact that
there are upper limits to the rates of increase of both prey, $H$, and
predator, $P$, which are not recognised in the Lotka-Volterra
model. These upper limits can be approached under favourable
conditions: for the predator, when the number of prey per predator is
large; for the prey, when the number of predators (and perhaps the
number of prey also) is small.  Furthermore, the Leslie-Gower model
does not posses the ``screw symmetry'' that is inherent in the
Lotka-Volterra model.
This model was initially studied by Leslie and Gower~\cite{Leslie and Gower},
and then by Pielow~\cite{Pielow}.  In the case of continuous
time, these considerations lead to the differential
equations~\cite[p. 91]{Pielow}
\begin{equation}
\frac{dH}{dt}=(r-bH-aP)H,\qquad\frac{dP}{dt}=\left(q-c\frac{P}{H}\right)P.
\label{2}
\end{equation}
Here $H(t)$ and $P(t)$ are the prey and the predator populations
respectively; $r$ and $q$ are the growth rates of the prey and the
predator respectively; $a$ is the attack rate; $1/rb$ is the carrying
capacity of the prey environment, and $1/qc$ is the efficiency of
consumption for the predator (that is $H(t)/qc$ is the predator
population that the prey population of the size $H$ can support).  All
the constants in the system~(\ref{2}) are positive.  This model always
has the unique coexisting fixed point $Q^{*}=(H^{*},P^{*})$, where
\begin{equation}
H^{*}=\frac{rc}{aq+bc},\qquad P^{*}=\frac{rq}{aq+bc},
\label{3}
\end{equation}
which was proved to be globally asymptotically stable~\cite{Korobeinikov_2001}.

The Leslie-Gower model can be immediately extended to the case of a
food chain. The food chain composed of $n+1$ levels where the $i$th level
depends (predates) upon only the $i-1$ level can be represented by the transfer
diagram
\[
H\longrightarrow P_{1}\longrightarrow P_{2}\longrightarrow \ldots
\longrightarrow P_{n}.
\]
This food chain can be described by the following system of differential equations:
\begin{eqnarray}
    \frac{dH}{dt} & = & (r-bH-aP_{1})H, \nonumber \\ \frac{dP_{1}}{dt}
    & = & \left(q_{1}-c_{1}\frac{P_{1}}{H}-s_{1}P_{2}\right)P_{1},
    \nonumber \\ & \vdots & \nonumber \\ \frac{dP_{i}}{dt} & = &
    \left(q_{i}-c_{i}\frac{P_{i}}{P_{i-1}}-s_{i}P_{i+1}\right)P_{i},
\label{chain-model}\\
               & \vdots & \nonumber \\ \frac{dP_{n}}{dt} & = &
\left(q_{n}-c_{n}\frac{P_{n}}{P_{n-1}}\right)P_{n}. \nonumber
\end{eqnarray}
Here the parameters $q_i$, $c_i$ and $s_i$
are defined by analogy to the single predator case, namely $q_i$ is the
reproduction rate of the $i$th predator, $c_i$ is defined so
that $1/q_{i}c_{i}$ is the efficiency of consumption for the $i$th predator,
and $s_i$ is the attack rate by the $i$th predator. In order for the
equations to be biologically meaningful these parameters must all be
positive quantities.

The global properties of this model are given by the following Theorem:

\begin{theorem}
The Leslie-Gower chain model (\ref{chain-model}) always has
a positive (coexisting) equilibrium state $Q^{*}_{n}=(H^{*},P^{*}_{1},\ldots,P^{*}_{n})$;
this equilibrium state is unique and globally asymptotically stable.
\end{theorem}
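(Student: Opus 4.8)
The plan is to prove the statement in two stages: first the existence and uniqueness of the coexisting equilibrium $Q^{*}_{n}$, and then its global stability by means of a Lyapunov function adapted to the cascade structure of (\ref{chain-model}).

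For existence and uniqueness I would exploit the triangular structure of the equilibrium equations. Writing $x_{0}=H$ and $x_{i}=P_{i}$, every bracket in (\ref{chain-model}) must vanish at a positive equilibrium, giving $r=bx_{0}+ax_{1}$, then $q_{i}=c_{i}x_{i}/x_{i-1}+s_{i}x_{i+1}$ for $1\le i\le n-1$, and finally $q_{n}=c_{n}x_{n}/x_{n-1}$. I would solve this by back-substitution parametrised by $H$: from the prey relation $x_{1}=(r-bx_{0})/a$, and then recursively $x_{i+1}=(q_{i}-c_{i}x_{i}/x_{i-1})/s_{i}$, so that every component is an explicit function $x_{i}(H)$ on $0<H<r/b$. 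Inserting these into the top-predator relation yields a single scalar equation $\Phi(H):=c_{n}x_{n}(H)/x_{n-1}(H)-q_{n}=0$. Existence of a root at which all $x_{i}(H)>0$ I would obtain from the intermediate value theorem after examining the boundary behaviour of $\Phi$ as $H\to0^{+}$ and $H\to(r/b)^{-}$; uniqueness would follow from monotonicity of the relevant compositions in $H$. For $n=1$ this recovers the equilibrium (\ref{3}) and the known result of \cite{Korobeinikov_2001}. Uniqueness of the coexisting state in fact also follows a posteriori once global stability is established.

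For global stability I would use a Volterra-type Lyapunov function
\[
V=\sum_{i=0}^{n} w_{i}\left(x_{i}-x_{i}^{*}-x_{i}^{*}\ln\frac{x_{i}}{x_{i}^{*}}\right),\qquad x_{0}^{*}=H^{*},\ x_{i}^{*}=P_{i}^{*},
\]
with weights $w_{i}>0$ still to be fixed. Each summand is nonnegative and vanishes only at $Q^{*}_{n}$, so $V$ is positive definite. Differentiating along (\ref{chain-model}) and using the equilibrium identities to rewrite every bracket as a deviation from zero gives, with $s_{0}=a$,
\[
\dot V=-w_{0}b\,(x_{0}-x_{0}^{*})^{2}-\sum_{i=0}^{n-1} w_{i}s_{i}\,(x_{i}-x_{i}^{*})(x_{i+1}-x_{i+1}^{*})-\sum_{i=1}^{n} w_{i}c_{i}\,(x_{i}-x_{i}^{*})\Big(\frac{x_{i}}{x_{i-1}}-\frac{x_{i}^{*}}{x_{i-1}^{*}}\Big).
\]
The decisive step is the choice of weights. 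Passing to the scaled variables $y_{i}=x_{i}/x_{i}^{*}$, the off-diagonal coupling between consecutive levels $j$ and $j+1$ cancels exactly when $w_{j+1}c_{j+1}(x_{j+1}^{*})^{2}/x_{j}^{*}=w_{j}s_{j}x_{j}^{*}x_{j+1}^{*}$, that is
\[
w_{j+1}=w_{j}\,\frac{s_{j}(x_{j}^{*})^{2}}{c_{j+1}\,x_{j+1}^{*}},\qquad j=0,\dots,n-1,
\]
a recursion producing positive weights from any $w_{0}>0$. With this choice the quadratic part of $\dot V$ is diagonal and negative definite, which already yields local asymptotic stability of $Q^{*}_{n}$.

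The hard part is upgrading this to a global statement, and here the Leslie-Gower ratio terms $x_{i}/x_{i-1}$ — the feature absent from Lotka-Volterra chains — are the main obstacle. After the weight matching, each neighbouring pair contributes $-w_{i}s_{i}x_{i}^{*}x_{i+1}^{*}\,(y_{i+1}-1)\big(y_{i}+y_{i+1}/y_{i}-2\big)$, and the identity $y_{i}+y_{i+1}/y_{i}-2=g(y_{i})+g(y_{i+1}/y_{i})+\ln y_{i+1}$, with $g(y)=y-1-\ln y\ge0$, splits this into a manifestly nonpositive piece $-w_{i}s_{i}x_{i}^{*}x_{i+1}^{*}(y_{i+1}-1)\ln y_{i+1}$ and an indefinite remainder proportional to $(y_{i+1}-1)\big(g(y_{i})+g(y_{i+1}/y_{i})\big)$. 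Controlling this remainder is the crux: the natural route is to confine the dynamics to a compact absorbing region (obtained from $\limsup H\le r/b$ and the upper bounds on $\limsup P_{i}$ that propagate up the chain) and to argue that there the negative prey term and the logarithmic contributions outweigh the remainder, so that $\dot V\le0$. Making this domination rigorous for all parameter values is the genuine difficulty, since $x_{i}/x_{i-1}$ can grow large whenever a lower trophic level is depleted. Once $\dot V\le0$ is secured, LaSalle's invariance principle finishes the argument: the largest invariant subset of $\{\dot V=0\}$ reduces to $\{Q^{*}_{n}\}$, giving global asymptotic stability of the unique coexisting equilibrium.
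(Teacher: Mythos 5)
Your global stability argument contains a genuine gap, and you have in fact located it yourself: with the Volterra function $\sum_i w_i\big(x_i-x_i^*-x_i^*\ln(x_i/x_i^*)\big)$ the Leslie--Gower ratio terms $c_i x_i/x_{i-1}$ produce cross terms that your weight recursion cancels only at the level of the linearization (hence local stability), and the ``domination'' of the indefinite remainder is exactly the step you leave open. Worse, the compact absorbing region you invoke cannot do the job: before persistence is established you have only upper bounds $\limsup x_i\le M_i$ and no positive lower bounds, so the region has the form $\{0<x_i\le M_i\}$, on which the ratios $x_i/x_{i-1}$ are unbounded and the remainder cannot be dominated uniformly. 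The argument as proposed does not close.

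The missing idea is the choice of Lyapunov function itself: for Leslie--Gower systems one must abandon the Volterra function in favour of
\[
V=\Big(\ln\frac{H}{H^{*}}+\frac{H^{*}}{H}\Big)+\sum_{i=1}^{n}B_{i}\Big(\ln\frac{P_{i}}{P_{i}^{*}}+\frac{P_{i}^{*}}{P_{i}}\Big),
\]
which is the construction of \cite{Korobeinikov_2001} extended to the chain. Since $\partial V/\partial P_{i}=(B_{i}/P_{i})(1-P_{i}^{*}/P_{i})$, the factor $P_{i}$ in $\dot P_{i}$ cancels and the ratio term contributes $-B_{i}c_{i}(P_{i}-P_{i}^{*})/P_{i-1}$, i.e.\ something \emph{linear} in $1/P_{i-1}$ that pairs exactly with the terms generated at level $i-1$. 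With the weights $B_{1}c_{1}=aH^{*}$ and $B_{i}c_{i}=B_{i-1}s_{i-1}P_{i-1}^{*}$ every cross term cancels identically --- not merely to quadratic order --- and one obtains
\[
\frac{dV}{dt}=bH^{*}\Big(2-\frac{H^{*}}{H}-\frac{H}{H^{*}}\Big)
+aP_{1}^{*}\Big(2-\frac{P_{1}^{*}}{P_{1}}-\frac{P_{1}}{P_{1}^{*}}\Big)
+\sum_{i=2}^{n}B_{i-1}s_{i-1}P_{i}^{*}\Big(2-\frac{P_{i}^{*}}{P_{i}}-\frac{P_{i}}{P_{i}^{*}}\Big)\le 0,
\]
strictly negative off $Q_{n}^{*}$ by the AM--GM inequality, so no absorbing set, no LaSalle argument and no remainder estimate are needed, and uniqueness of the coexisting equilibrium follows at once. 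Your existence argument by back-substitution and the intermediate value theorem is a legitimate alternative to the paper's induction, but as written the monotonicity of $\Phi$ and the positivity of all $x_{i}(H)$ on the relevant interval are asserted rather than proved.
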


\begin{proof}
\textbf{(1). Existence of the positive equilibrium state.}  We prove
this by induction.  The positive equilibrium state always exists for
$n=1$: the coordinates of the equilibrium state are given by
equalities~(\ref{3}). We assume that the statement of Theorem holds
when $n=m$ and prove that it holds for $n=m+1$ as well.

Starting from an $m$ level chain in which, by assumption, all equilibrium
populations $H^{(m)*}$, $P_i^{(m)*}$,
$i=1,\ldots,m$ are positive, we convert the
system to an $m+1$ level chain by introducing a population of
top level ($m+1$) predators and allow the system to equilibrate.
It can readily be seen that the positive region $\mathbb{R}^{m+1}_{+}$ is
an invariant set of this system. That prevents the sign
of $P_i$ for all $i=1,\ldots,m$ changing. Thus, from the
assumption of a positive equilibrium in the $m$ level system, it
follows that in the $m+1$ level system all $P_i^{(m+1)*}$ for  $i=1,\ldots,m$
are positive or zero.
Consider the final differential equation
\begin{equation}
\dfrac{\mathrm{d}P_{m+1}}{\mathrm{d}t}=(a-\dfrac{bP_{m+1}}{P_m})P_{m+1}.
\end{equation}
In order for equilibrium population at $P^{*}_{m+1}=0$ (rather than $P^{*}_{m+1}>0$) to
be attained we must have
\begin{equation}
\lim_{P^{*}_{m+1}\rightarrow 0} a-\dfrac{bP^{*}_{m+1}}{P^{*}_m}\leq 0.
\end{equation}
This requires $\lim_{P^{*}_{m+1}\rightarrow 0} P^{*}_m\leq 0$, whereas by
assumption that the $m$-level system has a positive equilibrium state
the converse holds. Thus the existence of an $m$ level positive
equilibrium state implies the existence of an $m+1$ level positive
equilibrium state. This completes this section of the proof.

\begin{figure}
\begin{center}
\includegraphics[width=0.8\textwidth]{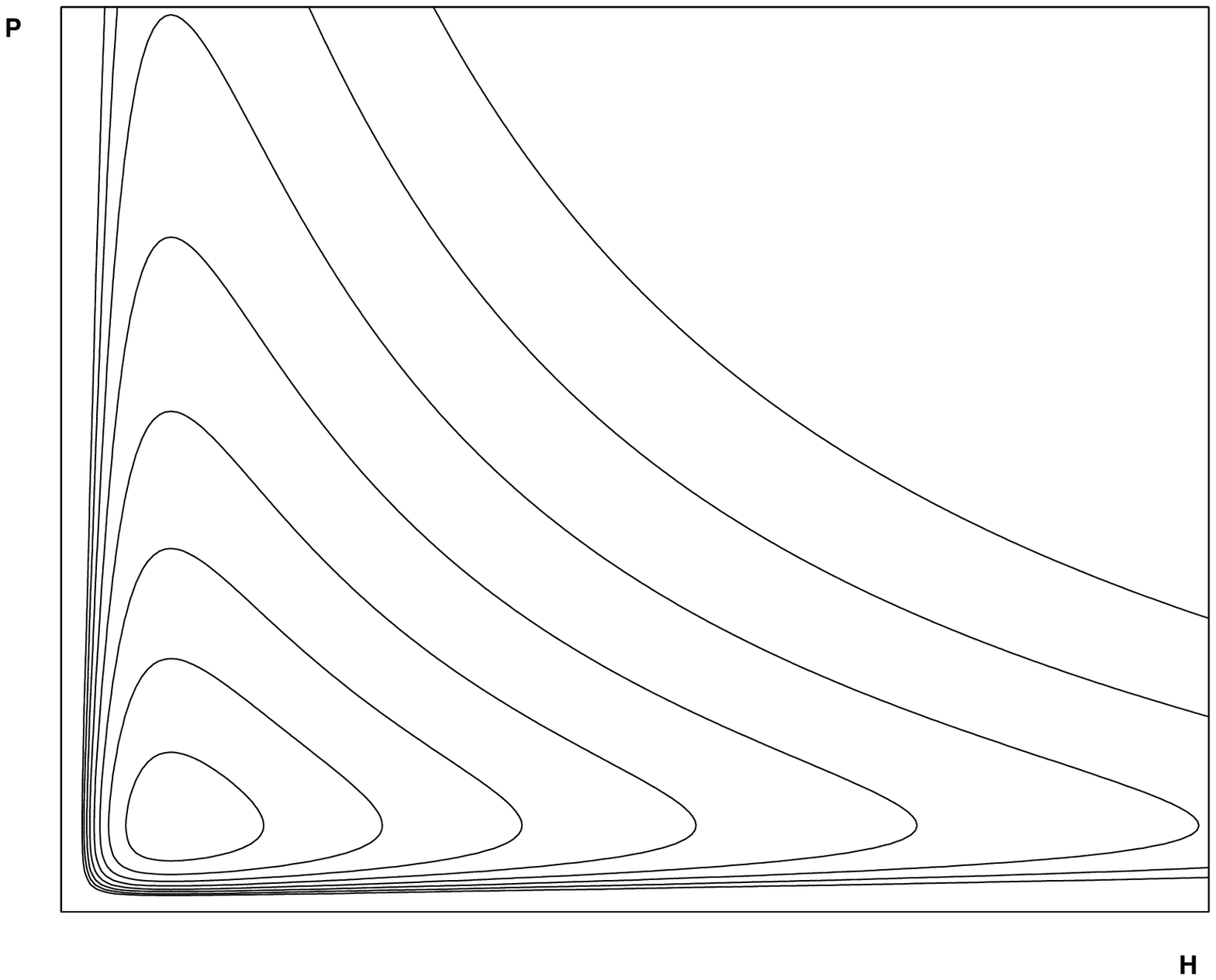}
\end{center}
\caption{Level curves of the Lyapunov function
\mbox{$\ln\frac{H}{H^{*}}+\frac{H^{*}}{H}+B\left(\ln\frac{P}{P^{*}}+\frac{P^{*}}{P}\right)$.}}
\end{figure}

\textbf{(2). Global asymptotic stability of the positive equilibrium state.}
A Lyapunov function
\[
V(H,P_{1},\ldots,P_{n})=
\left(\ln\frac{H}{H^{*}}+\frac{H^{*}}{H}\right)
+\sum_{i=1}^{n}B_{i}\left(\ln\frac{P_{i}}{P_{i}^{*}}+\frac{P_{i}^{*}}{P_{i}}\right),
\]
where $B_{i}c_{i}=B_{i-1}s_{i-1}P_{i-1}^{*}$ and $B_{1}c_{1}=aH^{*}$,
is defined and continuous for all $H,P_{1},\ldots,P_{n}>0$.
The function $V(H,P_{1},\ldots,P_{n})$ satisfies
\[
\frac{\partial V}{\partial H}=\frac{1}{H}\left(1-\frac{H^{*}}{H}\right),
\qquad
\frac{\partial V}{\partial P_{i}}=\frac{B_{i}}{P_{i}}\left(1-\frac{P_{i}^{*}}{P_{i}}\right),
\]
and hence the fixed point $Q^{*}_{n}$ is the only extremum of this function.
It is easy to see that the point $Q^{*}_{n}$ is the global minimum
of $V(H,P_{1},\ldots,P_{n})$ in $\mathbb{R}^{n+1}_{+}$.
(Fig. 1 shows the level curves of this function for $n=$1.)

The function $V(H,P_{1},\ldots,P_{n})$ satisfies
\begin{eqnarray*}
\frac{dV}{dt} & = & r-bH-aP_{1}-r\frac{H^{*}}{H}+bH^{*}+a\frac{H^{*}P_{1}}{H}\\
&  & +B_{1}\left(q_{1}-c_{1}\frac{P_{1}}{H}-s_{1}P_{2}-q_{1}\frac{P_{1}^{*}}{P_{1}}
          +c_{1}\frac{P_{1}^{*}}{H}+s_{1}\frac{P_{1}^{*}P_{2}}{P_{1}}\right)\\
&  & +\sum_{i=2}^{n-1}B_{i}\left(q_{i}-c_{i}\frac{P_{i}}{P_{i-1}}-s_{i}P_{i+1}
          -q_{i}\frac{P_{i}^{*}}{P_{i}}+c_{i}\frac{P_{i}^{*}}{P_{i-1}}
          +s_{i}\frac{P_{i}^{*}P_{i+1}}{P_{i}}\right)\\
&  & +B_{n}\left(q_{n}-c_{n}\frac{P_{n}}{P_{n-1}}-q_{n}\frac{P_{n}^{*}}{P_{n}}
          +c_{n}\frac{P_{n}^{*}}{P_{n-1}}\right)\\
& = & r+\sum_{i=1}^{n}B_{i}q_{i}+ bH^{*}-bH-r\frac{H^{*}}{H}+B_{1}c_{1}\frac{P_{1}^{*}}{H}\\
&   & -aP_{1}-\sum_{i=2}^{n}B_{i-1}s_{i-1}P_{i}
      -\sum_{i=1}^{n}B_{i}q_{i}\frac{P_{i}^{*}}{P_{i}}+\sum_{i=2}^{n}B_{i}c_{i}\frac{P_{i}^{*}}{P_{i-1}}\\
&   & +a\frac{H^{*}P_{1}}{H}-B_{1}c_{1}\frac{P_{1}}{H}
          +\sum_{i=1}^{n-1}B_{i}s_{i}\frac{P_{i}^{*}P_{i+1}}{P_{i}}
          -\sum_{i=2}^{n}B_{i}c_{i}\frac{P_{i}}{P_{i-1}}.
\end{eqnarray*}
By the definition of $B_{i}$, the equalities
\[  a\frac{H^{*}P_{1}}{H}-B_{1}c_{1}\frac{P_{1}}{H}=0, \qquad
    \sum_{i=1}^{n-1}B_{i}s_{i}\frac{P_{i}^{*}P_{i+1}}{P_{i}}
        -\sum_{i=2}^{n}B_{i}c_{i}\frac{P_{i}}{P_{i-1}}=0\]
hold. Furthermore, recollecting that
\[
r=bH^{*}+aP_{1}^{*}, \quad
q_{i}=c_{i}\frac{P_{i}^{*}}{P_{i-1}^{*}}+s_{i}P_{i+1}^{*}, \quad
q_{n}=c_{n}\frac{P_{n}^{*}}{P_{n-1}^{*}}
\]
hold at $Q^{*}_{n}$, we obtain
\[
\sum_{i=1}^{n}B_{i}q_{i}
 = B_{1}c_{1}\frac{P_{1}^{*}}{H^{*}} + \sum_{i=2}^{n}B_{i}c_{i}\frac{P_{i}^{*}}{P_{i-1}^{*}} + \sum_{i=1}^{n}B_{i}s_{i}P_{i+1}^{*}
 = aP_{1}^{*}+ 2\sum_{i=2}^{n}B_{i-1}s_{i-1}{P_{i}^{*}},
\]
\[ r\frac{H^{*}}{H}-B_{1}c_{1}\frac{P_{1}^{*}}{H}=bH^{*}\frac{H^{*}}{H}, \]
and
\begin{eqnarray*}
\sum_{i=1}^{n}B_{i}q_{i}\frac{P_{i}^{*}}{P_{i}}-\sum_{i=2}^{n}B_{i}c_{i}\frac{P_{i}^{*}}{P_{i-1}} & = &
B_{n}q_{n}\frac{P_{n}^{*}}{P_{n}}+\sum_{i=1}^{n-1}B_{i}\left(q_{i}-s_{i}P_{i+1}^{*}\right)\frac{P_{i}^{*}}{P_{i}}\\
& = & aP_{1}^{*}\frac{P_{1}^{*}}{P_{1}}+\sum_{i=2}^{n}B_{i}c_{i}\frac{P_{i}^{*}}{P_{i-1}^{*}}\frac{P_{i}^{*}}{P_{i}}\\
& = & aP_{1}^{*}\frac{P_{1}^{*}}{P_{1}}+\sum_{i=2}^{n}B_{i-1}s_{i-1}P_{i}^{*}\frac{P_{i}^{*}}{P_{i}}.
\end{eqnarray*}
Therefore,
\begin{eqnarray*}
\frac{dV}{dt} & = & bH^{*}\left(2-\frac{H^{*}}{H}-\frac{H}{H^{*}}\right)\\
  &&  +aP_{1}^{*}\left(2-\frac{P_{1}^{*}}{P_{1}}-\frac{P_{1}}{P_{1}^{*}}\right)
          +\sum_{i=2}^{n}B_{i-1}s_{i-1}P_{i}^{*}
                \left(2-\frac{P_{i}^{*}}{P_{i}}-\frac{P_{i}}{P_{i}^{*}}\right)\\
& = & -bH\left(1-\frac{H^{*}}{H}\right)^2+aP_{1}\left(1-\frac{P_{1}^{*}}{P_{1}}\right)^2
          +\sum_{i=2}^{n}B_{i-1}s_{i-1}P_{i}\left(1-\frac{P_{i}^{*}}{P_{i}}\right)^2.
\end{eqnarray*}
That is, for this model $\frac{dV}{dt}<0$ strictly holds for all
$H,P_{1},\ldots,P_{n}>0$, except the fixed point $Q^{*}_{n}$ where $\frac{dV}{dt}=0$.
Therefore, by the Lyapunov asymptotic stability theorem~\cite{Lyapunov},
the fixed point $Q^{*}$ is globally asymptotically stable.

\textbf{(3). Uniqueness of the positive equilibrium state.}
At any equilibrium state, $\frac{dV}{dt}=0$ must hold.
For this model, however, the fixed point $Q^{*}_{n}$ is the only point
in $\mathbb{R}_{+}^{n+1}$ where $\frac{dV}{dt}=0$ holds.

This completes the proof.
\end{proof}

Apart from the positive equilibrium state $Q^{*}_{n}$ where all $n+1$
species coexist, this system also has $n$ equilibrium states
$Q^{*}_{k}$ (where $k=0,1,\ldots,n-1$), which corresponds to the
reduced $m$-species food chains
\[
H\longrightarrow P_{1}\longrightarrow P_{2}\longrightarrow \ldots
\longrightarrow P_{k}.
\]
Biologically these correspond to the case in which some external
intervention has reduced the population of the $i$th species to zero (we
have proved above that this system is uniformly persistent,
and hence that can never occur for this model via the natural evolution
of the system) leading to the extinction of the $i+1$th level species that feeds
on species $i$, and then to all higher levels of the food chain.
For each of these equilibrium
states, $H,P_{1},\ldots,P_{k}>0$ while $P_{k+1}=\ldots=P_{n}=0$.
Thus, $Q^{*}_{0}$ corresponds to the predator-free case and has the
coordinates $H_{0}=r/b, P_{1}=\ldots=P_{n}=0$; $Q^{*}_{1}$ coincides
with the equilibrium state (\ref{3}) of the two-species
model~(\ref{2}).  The following Corollary immediately follows from the
Theorem:

\begin{corollary}
Apart from the positive equilibrium state $Q^{*}_{n}$, the system has
$n$ non-negative equilibrium states $Q^{*}_{k}$ (where
$k=0,1,\ldots,n-1$).  Each of these equilibrium states is unstable in
$\mathbb{R}^{n}_{\geq 0}$, but globally asymptotically stable in the
$k$-dimensional invariant subspace $\mathbb{R}^{k}_{+}=\{
H,P_{1},\ldots,P_{k}>0;P_{k+1}=\ldots=P_{n}=0 \} $.
\end{corollary}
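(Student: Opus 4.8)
The plan is to deduce the Corollary from the Theorem by exploiting the invariant face structure of the system~(\ref{chain-model}). First I would note that, because every predator equation carries the factor $P_{j}$, each coordinate plane $\{P_{j}=0\}$ is invariant, and hence so is the face
\[
\mathbb{R}^{k}_{+}=\{H,P_{1},\ldots,P_{k}>0;\,P_{k+1}=\cdots=P_{n}=0\}.
\]
The decisive observation is that the restriction of~(\ref{chain-model}) to this face is \emph{exactly} a $k$-level Leslie--Gower chain: setting $P_{k+1}=0$ turns the $P_{k}$-equation into $\dot P_{k}=(q_{k}-c_{k}P_{k}/P_{k-1})P_{k}$, which is precisely the top-level equation of the $k$-chain, while the equations for $H,P_{1},\ldots,P_{k-1}$ are unchanged. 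Applying the Theorem to this $k$-chain yields a unique positive equilibrium inside $\mathbb{R}^{k}_{+}$, namely $Q^{*}_{k}$, and shows it is globally asymptotically stable within $\mathbb{R}^{k}_{+}$. Letting $k$ run from $0$ to $n-1$ produces exactly the $n$ boundary equilibria and settles both the count and the stability-within-the-subspace claim. (For $k=0$ the face is the prey axis and $Q^{*}_{0}=(r/b,0,\ldots,0)$ is checked directly.)

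It remains to show that each $Q^{*}_{k}$ with $k<n$ is unstable in the full orthant. Here I would avoid computing the entire Jacobian and instead exhibit a single unstable transverse direction. Reintroduce the $(k+1)$-th predator alone, perturbing into the invariant $(k+1)$-level face with $P_{k+1}>0$ small and $P_{k+2}=\cdots=P_{n}=0$. Along this perturbation the per-capita growth rate of $P_{k+1}$ is
\[
\frac{1}{P_{k+1}}\frac{dP_{k+1}}{dt}=q_{k+1}-c_{k+1}\frac{P_{k+1}}{P_{k}},
\]
and since the $P_{k}$-coordinate of $Q^{*}_{k}$ is positive, this rate tends to $q_{k+1}>0$ as $P_{k+1}\to 0^{+}$. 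Equivalently, $\partial\dot P_{k+1}/\partial P_{k+1}$ evaluated at $Q^{*}_{k}$ equals $q_{k+1}$, a positive eigenvalue of the linearization transverse to $\mathbb{R}^{k}_{+}$. Hence $P_{k+1}$ grows away from zero and $Q^{*}_{k}$ is unstable in the full nonnegative orthant.

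The main obstacle is precisely this transverse analysis: the vector field fails to be differentiable on the deeper faces where two consecutive predators vanish, because the ratios $P_{j}/P_{j-1}$ become singular there, so a naive linearization of the whole system at $Q^{*}_{k}$ is ill-posed. The remedy, which I expect to be the crux, is to confine attention to the invariant $(k+1)$-level face, where $P_{k}$ stays bounded away from $0$ and the single relevant ratio $P_{k+1}/P_{k}$ is regular. Within that face $Q^{*}_{k}$ is exactly the predator-free boundary equilibrium of a $(k+1)$-chain whose interior equilibrium $Q^{*}_{k+1}$ attracts every positive orbit by the Theorem; thus orbits launched near $Q^{*}_{k}$ with $P_{k+1}>0$ leave every small neighbourhood of $Q^{*}_{k}$ and converge to the distinct point $Q^{*}_{k+1}$, confirming instability. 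Assembling the three ingredients---invariance and exact reduction on $\mathbb{R}^{k}_{+}$, global stability there via the Theorem, and the positive invasion rate $q_{k+1}$ transverse to it---yields the Corollary.
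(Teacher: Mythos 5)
Your proposal is correct and follows exactly the route the paper intends: the Corollary is asserted to follow immediately from the Theorem, and your argument --- invariance of the faces, exact reduction of the system to a $k$-level chain on each face, and an application of the Theorem to that reduced chain --- is precisely that deduction, spelled out. Your instability argument via the positive transverse invasion rate $q_{k+1}$ (equivalently, via the fact that interior orbits of the $(k+1)$-level face converge to $Q^{*}_{k+1}\neq Q^{*}_{k}$) correctly supplies the one step the paper leaves entirely implicit, and your remark that a naive linearization at $Q^{*}_{k}$ is ill-posed where two consecutive predators vanish is a genuine subtlety the paper does not address; restricting to the invariant $(k+1)$-face resolves it.
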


In conclusion, we have to note that, apart from the mentioned $n+1$
equilibrium states that are located in the nonnegative region
$\mathbb{R}^{n}_{\geq 0}$, the system has other $n-1$ points with the
coordinates that satisfy the equalities
\[
r=bH+aP_{1}, \quad
q_{i}=c_{i}\frac{P_{i}}{P_{i-1}}+s_{i}P_{i+1}, \quad
q_{n}=c_{n}\frac{P_{n}}{P_{n-1}}.
\]
Indeed, it is readily seen that this system of algebraic equations
is equivalent to a polynomial of the degree $n$ and that this system
has no complex solutions.
However, the existence of these equilibria do not contradict the
Theorem since these points are located outside of the non-negative
region $\mathbb{R}^{n}_{\geq 0}$, which is the phase space of the
system.  The origin is an unstable equilibrium state of the system as
well.

These results demonstrate both the practicality and the usefulness of
performing a stability analysis on non-trivial ecosystem models. We
have shown, using a Leslie-Gower food chain model as an
example, that it is possible to enumerate and characterize the
stability properties of all the equilibrium states of the model.

\begin{acknowledgments}
We acknowledge support of the Mathematics Applications Consortium for
Science and Industry (\url{www.macsi.ul.ie}) funded by the Science
Foundation Ireland Mathematics Initiative Grant 06/MI/005.
\end{acknowledgments}

\medskip

\medskip
\end{document}